\newtheorem{theorem}{Theorem}
\title{Softpressure: A Schedule-Driven Backpressure Algorithm for Coping with Network Congestion}
\author{Hsu-Chieh Hu \and Stephen F. Smith\\
 Carnegie Mellon University\\ Pittsburgh, PA, USA\\
hsuchieh@andrew.cmu.edu,sfs@cs.cmu.edu
 }
\begin{document}

\maketitle
\begin{abstract}
We consider the problem of minimizing the delay of jobs moving through a directed graph of service nodes.  In this problem, each node may have several links and is constrained to serve one link at a time.  As jobs move through the network, they can pass through a node only after they have been serviced by that node. The objective is to minimize the delay jobs incur sitting in queues waiting to be serviced.  Two distinct approaches to this problem have emerged from respective work in queuing theory and dynamic scheduling: the backpressure algorithm and schedule-driven control.  In this paper, we present a hybrid approach of those two methods that incorporates the stability of queuing theory into a schedule-driven control framework.  We then demonstrate how this hybrid method outperforms the other two in a real-time traffic signal control problem, where the nodes are traffic lights, the links are roads, and the jobs are vehicles.  We show through simulations that, in scenarios with heavy congestion, the hybrid method results in $50\%$ and $15\%$ reductions in delay over schedule-driven control and backpressure respectively. A theoretical analysis also justifies our results.

\end{abstract}

\section{Introduction}
Consider the following constrained network scheduling problem: We are given a networked system comprised of some number of interconnected scheduling agents. Each agent is responsible for managing a set of competing dynamic flows that must share a common resource, e.g., vehicle flows through an intersection in a transportation network or message traffic through servers in a communications network. We assume that agents have multiple queues to accommodate the competing flows. The objective is to minimize expected delay (or maximize expected throughput) of the entities moving through the network. Such a constrained optimization framework in stochastic systems is general and has practical applications in such areas as supply chain management, computer networks and traffic signal control. However, the question of how to most effectively schedule such networked systems still remains unclear.

We assume that the networked system operates in a dynamic environment. Jobs arrive continuously and are unknown before they arrive \cite{pinedo2015scheduling}. To behave intelligently over a networked system, individual agents (nodes) must coordinate scheduling decisions with their neighbors over an extended planning horizon. Recent work in decentralized, online planning has produced techniques for generating traffic signal timing plans with order-of-magnitude longer horizons than was previously possible \cite{Xie2012,xie2012schedule}, and an approach to achieving network-level coordination through exchange of schedule information. Under this {\em schedule-driven approach} new jobs from neighbor agents(and agents further upstream) are anticipated before they arrive, and are continually added into the ongoing problem-solving process in a rolling horizon fashion. The effectiveness of this approach has been shown in an actual urban environment across a range of traffic conditions \cite{smith2013smart}. At the same time, such a dynamic scheduling approach remains susceptible to sub-optimal long-run solution quality due to optimization over only a relatively short time horizon. Thus, decisions made during one static horizon can have unforeseen effects on a later period.

The network scheduling problem is also known as resource allocation problem \cite{tassiulas1992stability,georgiadis2006resource,shah2008network,shah2011message} or stochastic network optimization \cite{neely2010stochastic} in the literature of queueing systems. The most important long-run performance measure considered in queueing theory is stability. Informally, the queueing network is considered to be stable if the queues remain bounded over time. In this sense, queueing theory reasons more about long-run stochastic system characteristics such as stability of system or throughput, whereas dynamic scheduling typically deals with short-term combinatorics and uncertainty. With regards to ensuring stability of a stochastic network, one well recognized approach is the {\em Backpressure} algorithm \cite{tassiulas1992stability}, which has been applied widely in communication networks. Backpressure provides activation policies that chose , among a given set of queues (flows), which to serve at any point based on the queue lengths of all nodes. It guarantees stability at the expense of missing short-term opportunities optimize performance.

Recent work has proposed that hybridization of scheduling and queueing techniques can provide synergistic benefits in different application settings \cite{terekhov2012long,terekhov2014integrating}. Following this view, we propose a hybrid approach to constrained network scheduling problems that integrates backpressure into a dynamic scheduling framework. In this work, we show that the fundamental queueing theory concept of stability can be used to enhance the performance of dynamic scheduling, and that the coupling produces a composite algorithm that outperforms either approach individually. In particular in high congestion situations, where the number of jobs approaches the buffer limits of interconnected queues in the network, the stability of queues becomes more crucial, and the problem becomes less of a scheduling problem. 

In order to stabilize queues in a network (i.e., to prevent unbounded growth of queues), the jobs associated with longer queues or larger incoming arrival traffic should be serviced first. Within the above network scheduling problem, one straightforward way of achieving this behavior is to assign higher weights (i.e., higher priority) to these input jobs and compute schedules that minimize weighted cumulative delay. To balance the emphasis placed on queue management as a function of network saturation, we propose to use queue-length information to establish the weights. In situations where queue lengths are small, jobs causing the larger cumulative delay are serviced first as before; however as the network becomes saturated and queues become longer, jobs associated with longer queues will begin to dominate the original dynamic scheduling. 

To demonstrate and validate our hybrid approach, we focus on a real-world traffic network control problem and extend the schedule-driven approach to real-time  traffic control mentioned earlier \cite{Xie2012,xie2012schedule}. Simulation results demonstrate the ability of our approach to effectively integrate the long-run stability of queueing systems with coordinated, network-level dynamic scheduling. It is shown that, in scenarios with heavy congestion, the hybrid method outperforms both schedule driven control and backpressure individually, resulting in delay reductions of $50\%$ and $15\%$ respectively. Likewise, under low demand levels, it still seen to outperform both individual approaches. Hence, the hybrid approach is seen to effectively combine the relative strengths of schedule-driven control and backpressure. In addition, a theoretical analysis based on Lyapunov-Foster theory also justifies the results \cite{tassiulas1992stability}.

The remainder of the paper is organized as follows. We first introduce the definition of stability and the backpressure algorithm. Next, the algorithm necessary to achieve stabilized queues and its theoretical analysis are discussed. Then, an empirical analysis of the composite approach  is presented. Finally, we discuss the implications of integrating scheduling with queueing stability and conclusions are drawn.
     

\section{Problem Setting}
Consider a dynamic environment where the jobs change dynamically over time and their processing times are affected by various types of uncertainty. 
\subsection{A Queueing Network} \label{qnet}
The system of interest is a queueing network \cite{gross2008fundamentals}.  The connectivity of the networked system is represented by a directed graph $G = (V,E)$, where $V$ is the set of nodes and $E$ is the set of links. We consider a network consisting of $|V| = L$ nodes and $|E| = N$ links. Each node has a scheduling agent to serve jobs belonging to specific classes. On the link $(i,j)$, the node $j$ has a corresponding queue $Q_{ij}$ to buffer approaching jobs. The scheduling agent can only serve one queue at a time. For instance, Figure~\ref{net} shows that node $C$ has two queues $Q_{BC}$ and $Q_{AC}$, and at any point it can either serve $Q_{BC}$ or $Q_{AC}$ (but not both). Jobs arrive over time via an arrival process determined by the last visited node. A job may enter the network from any node and leave the network if it reaches its destination by appropriate routing through the network. We assume that there are $K$ job classes. At each time $t$ a job $i$ belonging to class $k$ arrives at node $l$ with probability $p_k$, and its service rate and processing time on the agent are denoted by $\mu_{lk}$ and $d_{li}$. We assume the agents in the network are able to serve jobs of all classes.  The objective of the agents is to minimize the total cumulative delay of jobs traveling through the network over a given time period, while maintaining the stability of the network.

\begin{figure}[!htbp]
\centering
\includegraphics[scale = 0.4]{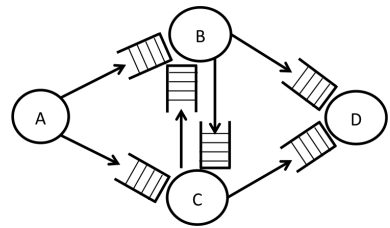}
\caption{An example of queueing networks consisting of  scheduling agents $A, B,C$ and $D$.}
\label{net}
\end{figure}

\subsection{Real-World Example:Transportation Network}
Recent work in decentralized, online planning has developed a schedule-driven approach to real-time traffic control \cite{Xie2012,xie2012schedule}. Key to this approach is a formulation of the core intersection scheduling problem as a single machine scheduling problem, where input jobs are clusters of vehicles in close proximity to each other as they pass through the intersection  (i.e., approaching platoons, queues). This aggregate representation allows long-horizon plans to be efficiently generated and enables network-level coordination through exchange of schedule information. Under this approach, the goal is to allocate green time to different signal  \textit{phases}, over time, where a signal phase is a compatible traffic movement pattern (e.g., East-West traffic flow). Each intersection agent asynchronously computes a schedule of green phases that minimizes the cumulative delay through the intersection of all approaching vehicles, and then communicates expected outflows to its downstream neighbors as it begins to execute its schedule.

The clusters that characterize the waiting and approaching traffic for each signal phase become the jobs that must be sequenced through the intersection. These cluster sequences provide short-term variability of traffic flows at each intersection and preserve the non-uniform nature of real-time flows. Specifically, the input is an ordered sequence of $(\# \textit{ of vehicles}, \textit{arrival time}, \textit{departure time} ) = (|c|, arr, dep)$ triples reflecting each approaching or queued vehicle on each road segment that has been sensed through the intersection's detectors.  Matching the assumptions of the above queueing network setting, all non-competing queues corresponding to a single phase are merged into a single large "virtual" queue.

Once the cluster sequences of each approaching road segments are represented, each cluster is viewed as a non-divisible job and a forward-recursion  dynamic programming search is performed to generate a phase schedule that minimizes the cumulative delay of all clusters in the current prediction horizon. The process constructs an optimal sequence of clusters that maintains the ordering of clusters along each road segment, and each time a phase change is implied by the sequence, then a delay corresponding to the intersection's yellow/all-red changeover time constraints is inserted.  If the resulting schedule is found to violate the maximum green time constraints for any phase (introduced to ensure fairness), then the first offending cluster in the schedule is split, and the problem is re-solved. 

More precisely, the delay that each cluster contributes to the cumulative delay $\sum_c d(c)$ is defined as 
\begin{equation}
d(c) = |c| \cdot (ast - arr(c)),
\end{equation} 
where $ast$ is the actual start time determined by the process through a forward recursion. The optimal sequence (schedule) is the one that incurs minimal delay for all vehicles.

\section{Stability}

In \cite{terekhov2012long}, stability was first introduced to the scheduling community. Informally, a system is considered to be stable if the queues remain bounded over time. This previous work used the definition that the load of each machine, defined as the ratio of the arrival rate to the service rate, is strictly less than 1 as the necessary conditions for stability. However, queueing networks need a more general definition of stability to describe the behaviors of networks. 

\subsection{Network Stability}
To discuss stability for networks, we specify the general network model. We consider that a set of scheduling agents controls job flows through a network with the goal of minimizing the average delay, subject to the constraint that only a specified set of links can be activated simultaneously due to the need to share the common resource. The network is assumed to operate in slotted time, i.e., $t\in\{0,1,2,\dots\}$.  We assume there are $N$ queues in the network. Let $\mathbf{Q}(t) = (Q_1(t), \dots, Q_N(t)) \in \mathbb{R}^N_+$, $t = 0,1,2,\dots$ be the queue length vector of the network, in units of job processing time. In this paper, we adopt the following notion of queueing stability for a network \cite{tassiulas1992stability,georgiadis2006resource,wongpiromsarn2012distributed}: 
\begin{equation}
\bar{Q} \equiv \lim_{T\rightarrow\infty}\frac{1}{T}\sum_{t = 0}^{T-1}\sum_{i = 1}^{N}\mathbb{E}[Q_i(t)] < \infty,
\label{stb}
\end{equation}
implying that the time-averaged length of queues is bounded. This definition also implies that the Markov process that describes the dynamics of the system is \textit{positive recurrent} \cite{meyn2012markov}. In queueing theory, establishing network stability is considered to be a prerequisite to more detailed analysis and scheduling policy design.  

\subsection{Backpressure}
We start with a description of the backpressure algorithm introduced by Tassiulas and Ephremides \cite{tassiulas1992stability}. At each time slot $t$, the agent at each node selects a particular non-conflicting set of links (e.g., a particular signal phase) to serve from the set of all incoming links. Each agent assigns a weight to each such set of links by summing the queue length that the set of links proposes to serve, and then chooses the set of links with the largest weight. It has been shown that the backpressure is stable and throughput optimal for the general networks considered in Section \ref{qnet}. The backpressure algorithm is also known to induce a reasonable (polynomial in network size) average queue-size for this model, and therefore the induced delay is upper bounded. 

The algorithm is sketched as follows: Consider a queueing network with the queue vector $\mathbf{Q}(t)$. Let $Q_{i}(t)$ be the size of queue $i$ at the beginning of time slot $t$. We denote the feasible set of non-conflicting links by $\mathcal{S}\subset \mathbb{R}_+^N$. In every time slot an activation vector $\pmb{\pi} \in \mathcal{S}$ is chosen; $Q_{i}(t)$ is given an amount of service $\pi_{i}$ in that time slot. For simplicity, we will restrict ourselves to $\mathcal{S}$ such that $\mathcal{S} \subset \{0,1\}^N$; that is, for any $\pmb{\pi} \in \mathcal{S}$ , $\pi_{i} = 1$ ($Q_{i}(t)$ receives one unit of service) or 0 ($Q_{i}(t)$ receives no service). The backpressure algorithm chooses a vector $\pmb{\pi}$ such that 

\begin{equation}
\pmb{\pi} \cdot \mathbf{Q}(t)  = \max_{\rho \in \mathcal{S}} \rho \cdot \mathbf{Q}(t),
\label{back}
\end{equation}
where $\mathbf{u}\cdot \mathbf{v} = \sum_{i = 1}^N u_i v_i$. The backpressure algorithm chooses the set of links to activate solely on the basis of current queue length and does not need to learn other parameters.

\section{Softpressure}
The backpressure algorithm ensures stability by activating the links with largest queue length. The activation is binary according to (\ref{back}). Dynamic scheduling, on the other hand, provides no such assurance of stability. In this section, we introduce an integration of schedule-driven control and backpressure that overcomes this deficiency and produces a stable schedule for a general queueing network.

\subsection{Weighted Cumulative Delay}
As mentioned earlier, a queueing network is considered to be \textit{stable} if the queues do not tend to increase without bound. According to the backpressure algorithm, serving the set of non-conflicting links with the largest queue length establishes this property. Larger queue length implies higher priority. Similarly, in the schedule-driven approach discussed earlier, we can introduce weights into this delay computation as a way to prioritize jobs from different links. The jobs with larger weights are served first. The basic idea is to bias the scheduling search more toward stabilizing local queues (both at the local nodes and at its neighbor nodes) as the level of local congestion increases. To measure the level of congestion, we rely on queue-length information associated with different links. To provide a low complexity scheme for queue management, we propose to weight each job of a given link equally. The weight associated with job $n$ on link $(i,j)$ can be expressed as
\begin{equation}
d(n) = (ast - arr(n)) \cdot w_{ij}.
\end{equation}
In the traffic signal control problem, the delay incurred by cluster $c$ is thus rewritten as
\begin{equation}
d(c) = |c| \cdot (ast - arr(c)) \cdot w(p),
\end{equation} 
where $w(p) $ is the weight assigned to the phase $p$ that cluster $c$ belongs to. The important question then becomes: how to set the weights for competing phases. 

\subsection{Weight Functions}
In this section, we provide an example of weight functions. To be suitable for combinatoric scheduling, the weights should have the following characteristics: (a) when the queues are empty or balanced, the corresponding links should be served with equal priorities (combinatorial scheduling should dominate performance in this case); and (b) the influence of shorter queues should not be reduced to zero in case jobs with larger $d_{li}$ arrive and incur larger cumulative delay. A probability function is a reasonable choice when the queue activation choice is viewed as a multinomial random variable. For instance, a node needs to pick a link to serve from $k$ links, with corresponding probabilities $p_1,\cdots, p_k$ and $\sum_{i = 1}^kp_i = 1$. If a link has larger weight, it suggests that the link should be served with a higher probability. Furthermore, the probability is a continuous function with parameters that match  the variability in processing times.

We propose a softmax function  of queue length as the weight function. The softmax formula can be derived analytically, based on a graphical model and statistical physics, if queue lengths are taken as the parameters of the graphical model \cite{Hu2017}. Assume that node $i$ has a set of queues $\{Q_{si} | s\in \mathcal{N}^{in}_{i}, (s,i)\in E\} $, where $ \mathcal{N}_i^{in}$ are the neighbors of node $i$ with a directed link $(s,i)$.  
\begin{align}
&w_{si} = \frac{ \exp(Q_{si})}{ \sum_{j\in \mathcal{N}^{in}_{i}}\exp(Q_{ji}) },
\label{multi}
\end{align}
\begin{align}
&\sum_{s \in \mathcal{N}^{in}_{i} }w_{si} = 1,
\end{align}
 For example, node $C$ in Figure~\ref{net} has two queues $Q_{AC}$ and $Q_{BC}$. The weight $w_{AC}$ is $\frac{\exp(Q_{AC})}{\exp(Q_{AC}) +  \exp(Q_{BC})}$. The softmax function fulfills the two characteristics we propose for the weight functions. However, we only takes individual node into consideration when calculating the weight function. It does not include the effect from its upstream and downstream nodes. Hence, we propose another modification of the weight functions that incorporate non-local observation.
 
\subsection{Coordinated Weight Functions}
 
Equation (\ref{multi}) only applies local queue-length information to weight the incoming jobs.  We can use non-local observation of queue-lengths of neighboring intersections to extend the prediction of queue length and improve the stability (\ref{stb}) further.  In addition to upstream neighbors $\mathcal{N}^{in}_{i}$, we denote downstream neighbors as $\mathcal{N}^{out}_i$.  Then, we define effective queue length as
\begin{align}
&\hat{Q}_{si} = Q_{si} + \sum_{h \in \mathcal{N}^{in}_{s}} Q_{hs} w_{hs} \eta^s_{hi} -  \sum_{k \in \mathcal{N}^{out}_{i}} Q_{ik} w_{ik}\eta^i_{sk} ,
\label{eff} 
\end{align}
where the $\eta^s_{hi}$ is the proportion of jobs that are routed from node $h$ to node $i$ through $s$. The second term can be viewed as the pressure that "pushes" jobs along in the $(s,i)$ direction.  The stronger the value, the greater the tendency for jobs to keep moving. The third term is analogous to a repulsive force that prevents jobs from approaching further.    

The weights for a grid network is a special case and can be derived from a probabilistic graphical model by applying the naive mean field method. Note that the weights can be computed in a decentralized way. We assume that each node knows its neighbor nodes and is able to communicate with them. First, the scheduling agent collects its local queue-length information. Once the queue-length information and the calculated weights are received from neighbor intersections, the agent then computes its weights and applies them as its job weights for generating the schedule. In the following sections, softpressure is realized as this special case: 

\begin{align}
&w_{si} = \frac{ \exp(\hat{Q}_{si})}{ \sum_{j\in \mathcal{N}^{in}_{i}}\exp(\hat{Q}_{ji}) }.
\label{hatq}
\end{align}

\subsection{Theoretical Guarantees of Stability}
In this section, we prove that by applying this weight function to jobs, an upper bound on the expected queue length is achieved. According to Little's law (queue length is equal to arrival rate multiplied by waiting time) \cite{little1961proof}, the delay is bounded as well. Furthermore, scheduling is able to provide a tighter bound than simply applying backpressure. We state the following theorem of our algorithm with the weight function (\ref{multi}).

\begin{theorem}
Consider a network has $N$ queues with arrival rates $\bm{\lambda} = (\lambda_1,\cdots, \lambda_N)$. Under the proposed softpressure algorithm, expected queue length is bounded by
\begin{equation}
 \limsup_t \mathbb{E}[\sum_{i = 1}^{N}   Q_i(t)] \leq \frac{N^2 }{2\epsilon}
 \label{bound}
 \end{equation}
 if for any queues the arrival rates satisfy $\bm{\lambda} \leq \sum_{j= 1}^K \alpha_i \bm{s}_j$ with $\sum_{j= 1}^K \alpha_j = 1 - \epsilon$ , $\epsilon > 0$ and  activation vector $\bm{s}_j \in \mathcal{S}$, $|\mathcal{S}| = K$.
\end{theorem}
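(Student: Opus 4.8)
The plan is to run the standard Lyapunov--Foster drift argument that underlies the backpressure stability proof \cite{tassiulas1992stability}, adapted to the softmax-weighted policy. I would take the quadratic Lyapunov function $L(\mathbf{Q}(t)) = \tfrac12\sum_{i=1}^{N} Q_i(t)^2$ and analyze the one-step conditional drift $\Delta(t) = \mathbb{E}\!\left[L(\mathbf{Q}(t+1)) - L(\mathbf{Q}(t)) \mid \mathbf{Q}(t)\right]$. Writing the queue recursion as $Q_i(t+1) = \max\{Q_i(t) - \pi_i(t), 0\} + a_i(t)$, where $a_i(t)$ are the arrivals with mean $\lambda_i$ and $\pi_i(t)\in\{0,1\}$ is the service actually granted, the elementary bound $(\max\{q-\mu,0\}+a)^2 \le q^2 + \mu^2 + a^2 + 2q(a-\mu)$ gives, after summing over $i$ and conditioning,
\begin{equation}
\Delta(t) \;\le\; B \;+\; \sum_{i=1}^{N} Q_i(t)\bigl(\lambda_i - \mathbb{E}[\pi_i(t)\mid \mathbf{Q}(t)]\bigr),
\end{equation}
where $B = \tfrac12\sum_i\mathbb{E}[\pi_i(t)^2 + a_i(t)^2\mid\mathbf{Q}(t)]$ collects the second-moment terms; because at most one unit of service and a bounded number of arrivals occur per queue per slot, $B$ is a finite constant, and I would track it to confirm $B \le N/2$.

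The second step turns this into a strictly negative drift using the rate hypothesis $\bm{\lambda}\le\sum_{j=1}^K\alpha_j\bm{s}_j$ with $\sum_j\alpha_j = 1-\epsilon$. The crucial observation is that the max-weight activation (\ref{back}) dominates \emph{every} admissible activation in $\mathcal{S}$, so for the chosen $\bm\pi$ one has $\sum_i Q_i\,\mathbb{E}[\pi_i\mid\mathbf{Q}] = \max_{\bm s\in\mathcal S}\sum_i Q_i s_i$. Since $\max_{\bm s}\sum_i Q_i s_i \ge \sum_j\alpha_j\sum_i Q_i s_{ji} = \sum_i Q_i\sum_j\alpha_j s_{ji} \ge \sum_i Q_i\lambda_i$ while also $\sum_i Q_i\lambda_i \le (1-\epsilon)\max_{\bm s}\sum_i Q_i s_i$, the service beats the arrivals with margin $\epsilon\max_{\bm s}\sum_i Q_i s_i \ge \epsilon\max_i Q_i \ge (\epsilon/N)\sum_i Q_i$. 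This yields $\Delta(t)\le B-(\epsilon/N)\sum_i Q_i(t)$, and it is precisely the factor $1/N$ from lower-bounding the maximum queue by the average that promotes the usual $B/\epsilon$ backpressure bound to the $N^2/(2\epsilon)$ stated in (\ref{bound}).

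Third, with $\Delta(t)\le B-(\epsilon/N)\sum_i Q_i(t)$ in hand the rest is routine: take total expectations, sum over $t=0,\dots,T-1$ so the $L$-terms telescope, drop the nonnegative $\mathbb{E}[L(\mathbf{Q}(T))]$, divide by $T$ and let $T\to\infty$. This produces $\limsup_T \tfrac1T\sum_{t}\mathbb{E}[\sum_i Q_i(t)]\le BN/\epsilon = N^2/(2\epsilon)$, which is exactly the stability definition (\ref{stb}); I would remark that the theorem's $\limsup_t$ is to be read at steady state (equivalently as the Ces\`aro limit), and that Little's law then transfers the bound to delay.

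The main obstacle is that softpressure is \emph{not} the exact max-weight rule: the softmax weights (\ref{multi}) merely bias the schedule-driven search toward longer queues, so the clean identity $\sum_i Q_i\,\mathbb{E}[\pi_i\mid\mathbf{Q}] = \max_{\bm s}\sum_i Q_i s_i$ used above need not hold verbatim. I would close this gap by exploiting the concentrating behaviour of the softmax: since $w_{si}=\exp(Q_{si})/\sum_j\exp(Q_{ji})$ places asymptotically all its mass on the longest queue as that queue grows, outside a bounded region of state space the softpressure activation agrees with the max-weight activation up to an exponentially small residual. Because Foster's criterion only demands negative drift outside a bounded set, this suffices for positive recurrence and the queue bound; the delicate part is quantifying the residual weight placed on non-maximal queues, delimiting the region where the softmax and max-weight policies diverge, and absorbing both into the constant $B$ so that the stated coefficient survives.
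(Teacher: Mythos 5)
Your proposal follows essentially the same route as the paper: a quadratic Lyapunov function, the standard Lyapunov--Foster one-step drift bound, the rate condition $\bm{\lambda}\le\sum_j\alpha_j\bm{s}_j$ with margin $\epsilon$, the factor $1/N$ from bounding the maximum queue below by the average, and telescoping to obtain $N^2/(2\epsilon)$. The one place where you go beyond the paper is the final paragraph: the paper disposes of the gap between softpressure and exact max-weight with the single sentence that ``the larger weight causes jobs to be serviced with higher priority until the queue is cleared,'' whereas you correctly identify this as the main obstacle and sketch a concentration argument for the softmax (negative drift outside a bounded set, residual absorbed into the constant). That step is still only sketched in your write-up --- the residual is not quantified and the interaction with the schedule-driven search is not controlled --- but it is more honest than the paper's own treatment of the same point, and the remainder of your argument is complete and correct.
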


\begin{proof}
Let the queue dynamics follow $Q_{i}(t+1) = Q_{i}(t) - s_i(t) \bm{1}_{Q_{i}(t) >0} + a_i(t) = Q_{i}(t)  + \Delta_i(t)$, where $s_i(t)$ and $a_i(t)$ are service and arrival rate. Note that $\mathbb{E}[s_i(t)] = \mu_i$ and $\mathbb{E}[a_i(t)] = \lambda_i$. We define the Lyapunov function $L(\bm{Q}(t)) = \sum_i Q_i^2(t)$ and use Lyapunov-Foster theory to write down the expected drift
\begin{align}
\nonumber &\mathbb{E}[L(\bm{Q}(t+1))  - L(\bm{Q}(t))| \bm{Q}(t)] \\
\nonumber&= 2 \sum_{i = 1}^N \mathbb{E}[Q_i(t+1) \cdot \Delta_i(t) | Q_i(t)] + \sum_{i = 1}^N \mathbb{E}[\Delta_i^2(t) | Q_i(t)]
\end{align}

Since the larger weight causes jobs to be serviced with higher priority until the queue is cleared, softpressure establishes the same drift criteria as backpressure: 
\begin{align}
\nonumber &\mathbb{E}[L(\bm{Q}(t+1)) | \bm{Q}(t)] \leq L(\bm{Q}(t))  - \frac{2\epsilon}{N} \sum_{i = 1}^{n} Q_i(t) + N.
\end{align}
The corresponding Lyapunov moment bound is 
\begin{equation*}
 \limsup_t \mathbb{E}[\sum_{i = 1}^{n}   Q_i(t)] \leq \frac{N^2 }{2\epsilon}.
\end{equation*}

Furthermore, as the queues become balanced or empty, the schedule-driven approach improves the service rate further through combinatorial optimization, which means that $\epsilon_{S} \geq \epsilon_{B}$,where $\epsilon_{S}$ and $\epsilon_{B}$ represent the difference between service rate and arrival rate of softpressure and backpressure respectively.
\end{proof}

\section{Empirical Investigations}
To evaluate our approach, we simulate performance on a real world network with $2$-way, multiple lane, and multi-directional traffic flow. The network model is based on the Baum-Centre neighborhood of Pittsburgh, Pennsylvania as shown in Figure~\ref{surtracmap}. The network consists mainly of 2-phased intersections. It can be seen as a two-way queueing grid network. All simulation runs were carried out according to a realistic traffic pattern from late afternoon through "PM rush" (4-6 PM). The traffic pattern ramps up volumes over the simulation interval as follows:  (0-30mins: 236 cars/hr, 30min-1hr: 354 cars/hr, 1hr-2hrs: 528 cars/hr ). This simulation model presents a complex practical application to verify the effectiveness of the proposed approach.

The simulation model was developed in VISSIM, a commercial microscopic traffic simulation software package. 
To assess the performance boost provided by our composite softpressure approach, we measure the average waiting time of all vehicles over 5 runs and take the performance of the original schedule-driven traffic control system \cite{Xie2012,xie2012schedule} as our baseline system.

\begin{table}[!htbp]
\centering
\scalebox{0.8}{
\begin{tabular}{  l| c | c  } 
  & Mean  (s) &  Std. deviation    \\  \hline
Schedule-driven & 125.77 & 111.41  \\  \hline
Softpressure & 86.53 & 64.24 \\ \hline
Backpressure & 100.23 &  76.23 
\end{tabular}
}
\caption{Avg. delay of Baum Centre Model at PM rush}
\label{resultstable2}
\end{table}

\begin{table}[!htbp]
\centering
  \scalebox{0.7}{
  \begin{tabular}{*{7}{c}}
   \toprule
      \multirow{2}{*}{}& \multicolumn{6}{c}{ Average Delay (second)}  \\
    
   \cmidrule(l){2-7}    & \multicolumn{2}{c}{Schedule-driven} &  \multicolumn{2}{c}{Softpressure}   &\multicolumn{2}{c}{ Backpressure}   \\
   & mean  & std. & mean & std. & mean &std. \\
    \midrule
%

 High demand& 212.14 & 361.41& 107.85&  77.13 & 122.58 &89.13\\
   Medium demand & 84.22& 61.90& 75.56 &55.84& 82.46 &61.40 \\
   Low demand& 71.84 &54.25& 65.10 & 49.11 & 73.89 & 56.77 \\

    \bottomrule
  \end{tabular}
  }
   \caption{Average delay under different scenarios.}
  \label{demandtable}
\end{table}

To also compare softpressure with backpressure, we developed a backpressure algorithm for traffic signal control according to \cite{varaiya2013max}. To apply the algorithm, the pressure of each intersection approach is first calculated as the corresponding queue length. The pressure of each phase is calculated as the sum of the pressures of all incoming traffic that receive right of way during this phase, and the resulting pressures are used to make a control decision. The control decision involves either a 2-second extension of the currently active phase or the activation of the next phase in the sequence. This decision process is then repeated at the end of the green extension of the currently active phase or after the minimum green time of the next phase, whichever has been decided. 
 

Table~\ref{resultstable2} shows the results of softpressure under PM rush, compared to original backpressure algorithm and the baseline schedule-driven approach. The weight function combines both local and non-local queue information. Furthermore, it incorporates the weights from neighbors.  As can be seen, the delay is reduced by $30\%$ and $15\%$, compared to the schedule-driven and backpressure approaches respectively. The use of weights can efficiently stabilize queues by controlling their length and reduce the variance of delay. The use of the neighbor information is also beneficial to get smaller queues  since it can avoid the spillover effect \cite{daganzo1998queue} by stopping vehicles further away from entry into a road segment with insufficient capacity. In addition, when upstream intersections send more traffic, the corresponding phase in the downstream intersection wills have longer green time to deal with it.

Since the evaluation focuses on highly congested scenarios, knowing the distribution of delay to vehicles helps us verify the effectiveness of the softpressure scheme. As shown in Figure~\ref{cdf}, use of softpressure with both non-local and local queue-length information shifts the cumulative distribution function (CDF) leftward and provides a $40\%$ improvement over the schedule-driven approach for $90\%$ of the vehicles. It should also be noted that while softpressure reduces average delay by $40s$, the reduction is more than $100s$ for the congested vehicles. In other words, minimizing queue length is especially effective for high congestion scenarios. In comparison to backpressure, softpressure provides a $10\%$ reduction for $90\%$ of the vehicles. 
Likewise, we can also observe a similar phenomenon if we categorize traffic demand into three different groups: high (528 cars/hr), medium (354 cars/hr), and low (236 cars/hr).  Table~\ref{demandtable} shows softpressure to yield an improvement over the schedule-driven approach of about $50\%$ for the high traffic demand case, compared to the $40\%$ improvement of backpressure.  On the other hand, performance in the medium and low traffic cases is comparable with that of the schedule-driven approach. Under medium and low traffic demand conditions, the advantage of scheduling dominates performance, and the benefit of softpressure is marginal. 

Table~\ref{demandtable2} provides another perspective on performance, using measurements of queue length.
We list the measurements of all intersections whose average queue length is greater than $2$ vehicles in the high traffic demand scenarios. The average queue length of softpressure is less than benchmark's, and thus delay is lower according to Little's law \cite{little1961proof} of queueing theory. For the heaviest loaded intersections, e.g., Baum-Aiken and Centre-Aiken, the reduction of queue length could be up to $10$ times. In addition, backpressure and softpressure have comparable average queue length. These results indicate that the combinatoric scheduling of softpressure provides further improvement over backpressure.

\begin{figure}[!htbp]
\centering
\includegraphics[scale = 0.28]{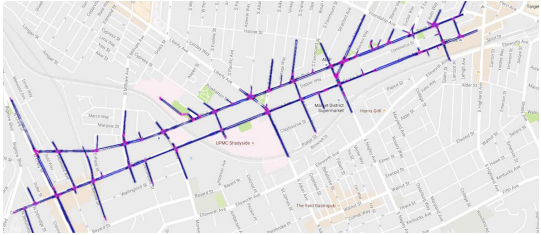}
\caption{Map of the 24 intersections in the Baum-Centre neighborhood of Pittsburgh, Pennsylvania}
\label{surtracmap}
\end{figure}

\begin{figure}[!htbp]
\centering
\includegraphics[scale = 0.32]{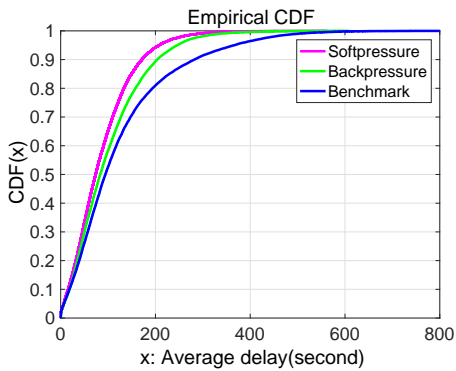}
\caption{The CDF of delay on three different optimization schemes.}
\label{cdf}
\end{figure}

\begin{table}[!htbp]
  \centering
 \resizebox{0.8\columnwidth}{!}{%
  \begin{tabular}{*{7}{c}}
   \toprule
      \multirow{2}{*}{}& \multicolumn{6}{c}{ Avg. Queue length (no.)} \\
    
   \cmidrule(l){2-7}   & \multicolumn{2}{c}{Benchmark} &  \multicolumn{2}{c}{Softpressure}   &\multicolumn{2}{c}{ Backpressure}  \\
   & mean  & std. & mean & std. & mean &std. \\
    \midrule
   Baum-Aiken & 16.01 &16.71& 7.02 & 9.32 & 7.20 &10.02 \\
   Baum-Craig & 3.34 &3.23 &2.66 & 2.49&3.55 & 3.11 \\ 
   Baum-Cypress & 2.61& 3.14 & 2.75 & 3.04 &2.82 & 3.83  \\
   Baum-Graham & 6.29 &18.43 & 2.66 & 2.76&2.31& 3.32  \\
   Baum-Millvale & 6.83 & 5.96& 6.26 &4.81 & 5.93 & 4.32 \\	
   Baum-Liberty & 16.86 &12.34&15.21 &10.95&16.56 &10.73 \\
   Baum-Melwood & 16.59& 22.36 &6.84 & 5.99 &4.15& 5.64 \\
   Baum-Negley & 7.87 & 6.66&6.45 & 5.34 &8.12 & 6.89 \\
   Baum-Roup & 5.72& 5.80 &4.19& 4.66 &6.56 & 13.43 \\
   Centre-Aiken & 41.05& 46.25 &3.24 &3.04& 3.22 & 2.97\\
   Centre-Craig & 4.93 &5.39 &4.17 & 3.84&4.83  &5.58 \\
   Centre-Cypress & 2.39 &2.37 &3.00& 2.96 &2.33 & 2.51  \\
   Centre-Millvale & 2.54 &2.73 &2.18 &2.17&2.19  & 2.24 \\
   Centre-Morewood & 4.32 &3.69&3.44 & 4.54 &3.56 & 2.85 \\
   Centre-Negley & 5.60& 4.97&5.07 &3.82 &5.60 & 4.17  \\
   Centre-Neville & 2.32 &2.42 &2.21 &2.19 &2.39 & 2.44\\
    \bottomrule
  \end{tabular}
  }
   \caption{Queue length and cluster size of the intersections under high demand traffic}
  \label{demandtable2}
\end{table}

\section{Discussion}
The motivation for studying the integration of queueing stability and scheduling is that these two communities address similar network scheduling problems in different ways and offer unique strengths. The advantages of the backpressure of queueing systems include guarantees of stability and better long-run average performance. In contrast, dynamic scheduling approaches focus on optimizing short-run objective and dealing with the variability of traffic in a periodic way. In this work, we have developed a hybridization of these two approaches to provide a better solution to a network-level scheduling problem. We have shown that  it is possible to ensure stability in a dynamic scheduling approach and keep its optimization capability.

From the experimental results, we can see that the stability of the queueing network is crucial to performance. Stability ensures that the system returns to a stable state after serving a high load and avoids system collapse \cite{stolyar2004maxweight}. In the queueing literature, establishing the stability of a system has always been the first priority in defining a control strategy. This explains why softpressure and backpressure achieved markedly stronger performance than the schedule-driven approach in the highly congested scenarios of our experimental study.  

It was also observed that softpressure and backpressure yield similar average queue lengths across all intersections in similar experimental scenarios. Yet, the delay difference between the two approaches varied up to $15\%$.  This improvement mainly results from the combinatorial optimization of traffic variability that softpressure aso includes. For example, when the queues are balanced, scheduling actively sequences jobs based on their variability, instead of treating each job equally, and sequencing jobs based on their variability leads to a shorter delay. From a theoretical perspective, we have also demonstrated that stability provides a bound on the queues (delay), and that scheduling improves this bound.

\section{Conclusions}
In this work, we described a composite approach designed to gracefully incorporate queueing stability into a network-level scheduling problem as the level of congestion increases. The approach stabilizes the queues through the use of queue-length information. This information is used to establish weights for jobs appearing on certain links. The weight function is specified as a softmax function and can be calculated in a decentralized way. The proposed approach demonstrates that theoretical long-run stability can be obtained within a dynamic scheduling approach to network control.
 The composite system was evaluated on a simulation model of a real-world traffic signal control problem. Results showed that the hybridization of queueing stability and scheduling improves average delay overall in comparison to both the baseline schedule-driven traffic control approach and the backpressure algorithm, and that solutions provide substantial gain in highly congested scenarios. Future work will focus on the design of how to decompose the queueing network into several independent sets based on the routing and traffic distribution for approaching the optimality of network scheduling. 

\section*{Acknowledgements}This research was funded in part by the University Transportation Center on Technologies for Safe and Efficient Transportation at Carnegie Mellon University and the CMU Robotics Institute.

\bibliographystyle{named}
\bibliography{softpressure}

\begin{thebibliography}{}

\bibitem[\protect\citeauthoryear{Daganzo}{1998}]{daganzo1998queue}
Carlos~F Daganzo.
\newblock Queue spillovers in transportation networks with a route choice.
\newblock {\em Transportation Science}, 32(1):3--11, 1998.

\bibitem[\protect\citeauthoryear{Georgiadis \bgroup \em et al.\egroup
  }{2006}]{georgiadis2006resource}
Leonidas Georgiadis, Michael~J Neely, Leandros Tassiulas, et~al.
\newblock Resource allocation and cross-layer control in wireless networks.
\newblock {\em Foundations and Trends{\textregistered} in Networking},
  1(1):1--144, 2006.

\bibitem[\protect\citeauthoryear{Gross}{2008}]{gross2008fundamentals}
Donald Gross.
\newblock {\em Fundamentals of queueing theory}.
\newblock John Wiley \& Sons, 2008.

\bibitem[\protect\citeauthoryear{Hu and Smith}{2017}]{Hu2017}
Hsu-Chieh Hu and S.~F. Smith.
\newblock Coping with large traffic volumes in schedule-driven traffic signal
  control.
\newblock In {\em ICAPS}, 2017.

\bibitem[\protect\citeauthoryear{Little}{1961}]{little1961proof}
John~DC Little.
\newblock A proof for the queuing formula: L= $\lambda$ w.
\newblock {\em Operations research}, 9(3):383--387, 1961.

\bibitem[\protect\citeauthoryear{Meyn and Tweedie}{2012}]{meyn2012markov}
Sean~P Meyn and Richard~L Tweedie.
\newblock {\em Markov chains and stochastic stability}.
\newblock Springer Science \& Business Media, 2012.

\bibitem[\protect\citeauthoryear{Neely}{2010}]{neely2010stochastic}
Michael~J Neely.
\newblock Stochastic network optimization with application to communication and
  queueing systems.
\newblock {\em Synthesis Lectures on Communication Networks}, 3(1):1--211,
  2010.

\bibitem[\protect\citeauthoryear{Pinedo}{2015}]{pinedo2015scheduling}
Michael Pinedo.
\newblock {\em Scheduling}.
\newblock Springer, 2015.

\bibitem[\protect\citeauthoryear{Shah}{2008}]{shah2008network}
Devavrat Shah.
\newblock Network scheduling and message-passing.
\newblock In {\em Performance Modeling and Engineering}, pages 147--184.
  Springer, 2008.

\bibitem[\protect\citeauthoryear{Shah}{2011}]{shah2011message}
Devavrat Shah.
\newblock Message-passing in stochastic processing networks.
\newblock {\em Surveys in Operations Research and Management Science},
  16(2):83--104, 2011.

\bibitem[\protect\citeauthoryear{Smith \bgroup \em et al.\egroup
  }{2013}]{smith2013smart}
Stephen~F Smith, Gregory~J Barlow, Xiao-Feng Xie, and Zachary~B Rubinstein.
\newblock Smart urban signal networks: Initial application of the surtrac
  adaptive traffic signal control system.
\newblock In {\em ICAPS}. Citeseer, 2013.

\bibitem[\protect\citeauthoryear{Stolyar}{2004}]{stolyar2004maxweight}
Alexander~L Stolyar.
\newblock Maxweight scheduling in a generalized switch: State space collapse
  and workload minimization in heavy traffic.
\newblock {\em Annals of Applied Probability}, pages 1--53, 2004.

\bibitem[\protect\citeauthoryear{Tassiulas and
  Ephremides}{1992}]{tassiulas1992stability}
Leandros Tassiulas and Anthony Ephremides.
\newblock Stability properties of constrained queueing systems and scheduling
  policies for maximum throughput in multihop radio networks.
\newblock {\em IEEE transactions on automatic control}, 37(12):1936--1948,
  1992.

\bibitem[\protect\citeauthoryear{Terekhov \bgroup \em et al.\egroup
  }{2012}]{terekhov2012long}
Daria Terekhov, Tony~T Tran, Douglas~G Down, and J~Christopher Beck.
\newblock Long-run stability in dynamic scheduling.
\newblock In {\em ICAPS}, 2012.

\bibitem[\protect\citeauthoryear{Terekhov \bgroup \em et al.\egroup
  }{2014}]{terekhov2014integrating}
Daria Terekhov, Tony~T Tran, Douglas~G Down, and J~Christopher Beck.
\newblock Integrating queueing theory and scheduling for dynamic scheduling
  problems.
\newblock {\em Journal of Artificial Intelligence Research}, 50:535--572, 2014.

\bibitem[\protect\citeauthoryear{Varaiya}{2013}]{varaiya2013max}
Pravin Varaiya.
\newblock Max pressure control of a network of signalized intersections.
\newblock {\em Transportation Research Part C: Emerging Technologies},
  36:177--195, 2013.

\bibitem[\protect\citeauthoryear{Wongpiromsarn \bgroup \em et al.\egroup
  }{2012}]{wongpiromsarn2012distributed}
Tichakorn Wongpiromsarn, Tawit Uthaicharoenpong, Yu~Wang, Emilio Frazzoli, and
  Danwei Wang.
\newblock Distributed traffic signal control for maximum network throughput.
\newblock In {\em Intelligent Transportation Systems (ITSC), 2012 15th
  International IEEE Conference on}, pages 588--595. IEEE, 2012.

\bibitem[\protect\citeauthoryear{Xie \bgroup \em et al.\egroup
  }{2012a}]{Xie2012}
Xiao-Feng Xie, Stephen~F Smith, and Gregory~J Barlow.
\newblock Schedule-driven coordination for real-time traffic network control.
\newblock In {\em ICAPS}, 2012.

\bibitem[\protect\citeauthoryear{Xie \bgroup \em et al.\egroup
  }{2012b}]{xie2012schedule}
Xiao-Feng Xie, Stephen~F Smith, Liang Lu, and Gregory~J Barlow.
\newblock Schedule-driven intersection control.
\newblock {\em Transportation Research Part C: Emerging Technologies},
  24:168--189, 2012.

\end{thebibliography}

\end{document}